\newcommand{\D}{\cal{D}}
\newtheorem{lem}  {Lemma} 
\newcommand {\BL} {\begin{lem}} 
\newcommand {\EL} {\end{lem}} 
\newtheorem{cor}  {Corollary} 
\newcommand {\BCR} {\begin{cor}}
\newcommand {\ECR} {\end{cor}}
\newtheorem{thm} {Theorem} 
\newcommand {\BT} {\begin{thm}}
\newcommand {\ET} {\end{thm}}
\newtheorem{defi} {Problem} 
\newcommand {\BDE} {\begin{defi}}
\newcommand {\EDE} {\end{defi}}
\newcommand{\comment}[1]{}
\newtheorem{xxx} {Definition} 
\newcommand {\BD} {\begin{xxx}}
\newcommand {\ED} {\end{xxx}}
\newtheorem{xxxxx} {Observation} 
\newcommand {\BO} {\begin{xxxxx}}
\newcommand {\EO} {\end{xxxxx}}
\long\def\symbolfootnote[#1]#2{\begingroup%
\def\thefootnote{\fnsymbol{footnote}}\footnote[#1]{#2}\endgroup}
\begin{document}

\title{Towards an Optimal Space-and-Query-Time Index for \\ Top-$k$ Document Retrieval}
\author{Wing-Kai Hon\inst{1}, Rahul Shah\inst{2}, and
        Sharma V. Thankachan\inst{2}  }
\institute{
     Department of CS, National Tsing Hua University,
     Taiwan. \email{wkhon@cs.nthu.edu.tw}
\and
     Department of CS, Louisiana State University, USA.
     \email{\{rahul,thanks\}@csc.lsu.edu}
}
\date{}

\maketitle

\begin{abstract}

Let $\D = $$ \{d_1,d_2,...d_D\}$ be a given set of  $D$ string documents of total length $n$, our task is to index $\D$, such that the $k$ most relevant documents for an online query pattern $P$ of length $p$ can be retrieved efficiently. We propose an index of size $|CSA|+n\log D(2+o(1))$ bits and $O(t_{s}(p)+k\log\log n+poly\log\log n)$ query time for the basic relevance metric \emph{term-frequency}, where  $|CSA|$ is the size (in bits) of a compressed full text index of $\D$, with $O(t_s(p))$ time for searching a pattern of length $p$ . We further reduce the space to $|CSA|+n\log D(1+o(1))$ bits, however the query time will be 
$O(t_s(p)+k(\log \sigma \log\log n)^{1+\epsilon}+poly\log\log n)$, where $\sigma$ is the alphabet size and $\epsilon >0$ is any constant.

\end{abstract}

\section{Introduction and Related Work}

Document retrieval is a special type of pattern matching that is closely related to information retrieval and web searching.  In this problem, the data consists of a collection of text documents, and given a query pattern $P$, we are required to report all the documents in which this pattern occurs (not all the occurrences). In addition, the notion of \emph{relevance} is commonly applied to rank all the documents that satisfy the query, and only those documents with the highest relevance are returned. Such a concept of relevance has been central in the effectiveness and usability of present day search engines like Google, Bing, Yahoo, or Ask. When relevance is considered, the query has an additional input parameter $k$, and the task is to report the $k$ documents with the highest relevance to the query pattern (in the decreasing order of relevance), instead of finding all the documents that contain the query pattern (as there may be too many).
More formally, let $\cal{D} = $$ \{d_1,d_2,...d_D\}$ denote a given set of $D$ string documents to be indexed, whose total lengths is $n$,
and let $P$ denote a query pattern of length $p$.  Let $occ$ be the number of occurrences of this pattern over the entire collection $\cal{D}$, and $ndoc$ be the number of documents out of $D$ in which the pattern $P$ appears.  One of the main issues is the fact that $k \ll ndoc \ll occ$. Thus, it is important to design indexes which do not have to go through all the occurrences or even all the documents in order to answer a query.

The research in string document retrieval was introduced by Matias et al.~\cite{matias}, and Muthukrishnan~\cite{muthu} formalized it with the introduction of relevance metrics like \emph{term-frequency (tf)} and \emph{min-dist},\footnote{$tf(P,d)$ is the number of occurrences of $P$ in $d$ and \emph{min-dist$(P,d)$} is the minimum distance between two occurrences of $P$ in $d$} and proposed indexes with efficient query performance.
Since then, this has been an active research area~\cite{flexible,VM}.
The top-$k$ document retrieval problem was introduced in~\cite{JDA}, where an $O(n\log n)$-word index is proposed with $O(p+k+\log n\log\log n)$ query time for the case when the relevance metric is \emph{term-frequency}.
A recent flurry of activities in this area~\cite{xxx,CPM10,gagie,Belazz,culpepper,sigir,SEA,Nekrich,zzz,SEA12} came with Hon et al.'s work~\cite{FOCS09} where they gave a linear-space index with $O(p+k\log k)$ query time, which works for a wide class of relevance metrics.
The recent structure by Navarro and Nekrich~\cite{soda12} achieves optimal $O(p+k)$ query time using $O(n(\log\sigma+\log D+\log\log n))$ bits, which improves the results in~\cite{FOCS09} in both space and time. If the relevance metric is \emph{term-frequency}, their index space can be further improved to $O(n(\log\sigma+\log D))$ bits. All these interesting results have contributed towards the goal of achieving an optimal query time index. However, the space is far from optimal, moreover the constants hidden in the space bound can restrict the use of these indexes in practice. On the other side, the succinct index proposed by Hon et al.~\cite{FOCS09} takes about $O(\log^4 n)$ time to report each document, which is likely to be impractical. This time bound has been further improved by~\cite{Belazz,gagie}, but still ${\rm polylog}(n)$ time is required per reported document.
Another line of work is to derive indexes using about $n\log D$ bits additional space, and the best known index takes a per document report time of $O(\log k \log^{1+\epsilon} n)$~\cite{Belazz}. Efficient practical indexes are also known~\cite{culpepper}, but their query algorithms are heuristics with no worst-case bound. 
In this paper, we introduce two space efficient indexes with per document report time \emph{poly}-log-logarithmic in $n$. 
The main results are summarized as follows.

\BT
There exists an index of size $|CSA|+n\log D(2+o(1))$ bits with a query time of $O(t_s(p)+k\log\log n+ poly\log\log n)$ for retrieving top-$k$ documents with the highest term frequencies, where $|CSA|$ is the size (in bits) of a compressed full text index of $\D$ with $O(t_s(p))$ time for searching a pattern of length $p$.
\ET

\BT
There exists an index of size $|CSA|+n\log D(1+o(1))$ bits with a query time of $O(t_s(p)+k(\log \sigma \log\log n)^{1+\epsilon}+poly\log\log n)$ for retrieving top-$k$ documents with the highest term frequencies, where $|CSA|$ is the size (in bits) of a compressed full text index of $\D$ with $O(t_s(p))$ time for searching a pattern of length $p$, $\sigma$ is the alphabet size and $\epsilon >0$ is a constant.
\ET

Table~\ref{table1} gives a summary of the major results in the top-$k$ frequent document retrieval problem. 
The time complexities are simplified by assuming that we are using the full text index proposed by Belazzougui and Navarro, of size $|CSA|= nH_h+O(n)+o(n\log\sigma)$ bits and $t_s(p)=O(p)$, where $H_h$ is the $h$th order empirical entropy of $\D$~\cite{csanew}. We also assume $D < n^{\varepsilon}$ for some $\varepsilon <1$ and $\epsilon >0$ is any constant.

\begin{table}
\caption{Indexes for Top-$k$ Frequent Document Retrieval} 
\label{table1}
\begin{center}
\small
\begin{tabular}{| l | l | l | l } 
\hline
\textbf{Source} & \textbf{Index Space (in bits)} & \textbf{Time per reported document} \\
\hline \hline 
 ~\cite{JDA} & $O(n\log n +n\log^2 D)$ & $O(1)$  \\ \hline
 
 ~\cite{FOCS09} & $O(n\log n)$ &  $O(\log k)$ \\  \hline

~\cite{culpepper}  & $|CSA| + n\log D(1+o(1))$  & Unbounded\\  \hline 
 
~\cite{FOCS09}  & $2|CSA| +o(n)$  & $O(\log^{4+\epsilon} n) $\\  \hline 

~\cite{Belazz} & $2|CSA| +o(n)$  & $O(\log k \log^{2+\epsilon} n) $\\  \hline

~\cite{gagie} & $|CSA| +O(\frac{n\log D}{\log\log D})$ & $O(\log^{3+\epsilon} n) $ \\  \hline 

~\cite{Belazz} & $|CSA| +O(\frac{n\log D}{\log\log D})$  & $O(\log k \log^{2+\epsilon} n) $\\  \hline 

~\cite{Belazz} & $|CSA| +O(n\log\log\log D)$  & $O(\log k \log^{2+\epsilon} n) $\\  \hline 

 ~\cite{soda12} & $O(n\log\sigma+n\log D)$ & $O(1)$ \\ \hline

~\cite{gagie} & $|CSA| +n\log D+o(n)$  & $O(\log^{2+\epsilon} n) $ \\  \hline

~\cite{Belazz} & $|CSA| +n\log D+o(n)$  & $O(\log k \log^{1+\epsilon} n) $\\  \hline

\hline \hline 
Ours & $|CSA| +2n\log D(1+o(1))$  & $O(\log\log n) $\\  \hline 

Ours & $|CSA| +n\log D(1+o(1))$  & $O((\log \sigma \log\log n)^{1+\epsilon}) $\\
 \hline \end{tabular}
\end{center}       
\end{table}

\section{Preliminaries}

\subsection{Top-$k$ Using Range Maximum/Minimum Queries}
One of the main tools in top-$k$ retrieval is the \emph{range maximum/minimum query structures} (RMQ)~\cite{rmq}.
We summarize the results in the following lemmas (We defer the proofs to the Appendix A and B respectively).

\begin{lemma}
Let $A[1...n]$ be an array of $n$ numbers. We can preprocess $A$ in linear time and associate $A$ with a $2n+o(n)$ bits RMQ data structure such that given a set of $t$ non-overlapping ranges $[L_1, R_1], [L_2, R_2],\ldots,[L_t, R_t]$, we can find the largest (or smallest) $k$ numbers in $A[L_1..R_1] \cup A[L_2..R_2]\cup\cdots\cup A[L_t..R_t]$ in unsorted order in $O(t+k)$ time.
\end{lemma}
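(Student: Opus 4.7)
The plan is to reduce the problem to heap selection on a single implicit binary max-heap. For the underlying data structure, I would equip $A$ with the Fischer--Heun RMQ index, which occupies $2n+o(n)$ bits, is built in linear time, and answers $\text{RMQ}(\ell,r)$ in $O(1)$ time without even accessing $A$ itself. This already gives the advertised space bound and an $O(1)$-time RMQ primitive on top of which the query algorithm is built.

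Next, I would use the Cartesian-tree view of RMQ. Each non-empty subrange $[\ell,r]$ is represented by a node whose value is $A[m]$ for $m=\text{RMQ}(\ell,r)$, and whose two children are the nodes for $[\ell,m-1]$ and $[m+1,r]$ (an empty subrange is a null child). By construction this tree is max-heap-ordered, and each child is generated from its parent in $O(1)$ time by a single RMQ call. Given the $t$ input ranges, I form the $t$ such trees $T_1,\ldots,T_t$, compute their roots with one RMQ per range at total cost $O(t)$, and then glue them together by a virtual binary tree on $2t-1$ extra internal nodes whose values are $+\infty$; the result is a single max-heap-ordered binary tree $T$ whose virtual leaves are the roots of $T_1,\ldots,T_t$.

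The key step is then to extract the largest $k+2t-1$ elements of $T$ in $O(k+t)$ time using Frederickson's heap-selection algorithm, which finds the top $k'$ items of any max-heap-ordered binary tree in $O(k')$ time provided children are accessible in $O(1)$; the precondition is met because our children are generated on the fly by RMQ. The $2t-1$ virtual $+\infty$ sentinels are guaranteed to appear in the output and are discarded in a final $O(t+k)$ sweep, leaving the desired $k$ largest actual entries of $A[L_1..R_1]\cup\cdots\cup A[L_t..R_t]$ in unsorted order.

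The only subtle point, and arguably the main obstacle in the write-up, is verifying that distinct non-virtual nodes of $T$ correspond to disjoint array positions so that no element of $A$ is counted more than once and every element inside the union is reachable. This follows because the input ranges are pairwise disjoint and, within a single Cartesian tree, each non-root node occupies a strict sub-interval obtained by deleting its parent's argmax. The analogous statement for minima is obtained by swapping RMQ for range-minimum queries and $+\infty$ for $-\infty$ throughout.
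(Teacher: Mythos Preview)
Your proposal is correct and is essentially the same argument as the paper's: both build an implicit max-heap by hanging the $t$ Cartesian trees (generated on the fly via $O(1)$ RMQ calls) below a small top tree of $+\infty$ sentinels, then invoke Frederickson's $O(k')$ heap-selection to peel off the top $k+O(t)$ values and discard the sentinels. The only discrepancy is a harmless counting slip: a full binary tree with $t$ leaves has $t-1$ internal nodes, not $2t-1$, so the paper extracts $k+t-1$ elements rather than $k+2t-1$; either way the bound is $O(t+k)$.
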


\begin{lemma}
Let $A[1...n]$ be an array of $n$ integers taken from the set $[1,\pi]$, and each number $A[i]$ is associated with a score (which may be stored separately and can be computed in $t_{score}$ time). Then the array $A$ can be maintained in $O(n\log \pi)$ bits, such that given two ranges $[x',x'']$, $[y', y'']$, and a parameter $k$, we can search among those entries $A[i]$ with $x' \leq i \leq x''$ and $y' \leq A[i] \leq y''$, and report the $k$ highest scoring entries in unsorted order in $O((\log \pi + k)(\log \pi+t_{score}))$ time.
\end{lemma}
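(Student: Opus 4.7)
The plan is to implement $A$ as a wavelet tree over the alphabet $[1,\pi]$, which occupies $n\log\pi(1+o(1))$ bits and has $\lceil\log\pi\rceil$ levels, and to augment every wavelet-tree node $v$ with an RMQ structure (from Lemma~1) built on the scores of the positions reaching $v$, indexed in $v$'s internal order. Because each level of the tree contributes only $2n+o(n)$ bits of RMQ overhead and there are $O(\log\pi)$ levels, the overall space stays within the stated $O(n\log\pi)$-bit budget; note that the actual score values are \emph{not} duplicated inside the tree.

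Given a query $([x',x''],[y',y''],k)$, I would first decompose $[y',y'']$ into $O(\log\pi)$ canonical wavelet-tree nodes in the standard way, and during the same $O(\log\pi)$-time descent project the index range $[x',x'']$ down to a contiguous interval $[l_v,r_v]$ in each canonical node $v$'s internal order via rank operations on the level bitmaps. These intervals correspond to pairwise disjoint subsets of original positions and together cover exactly the entries satisfying the query. I would then run the $O(t+k)$ multi-range top-$k$ selection of Lemma~1 over these $t=O(\log\pi)$ intervals, treating the per-node RMQ structures as the RMQ oracles.

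The subtlety is that Lemma~1 is stated for $t$ ranges inside a single array, whereas here the ranges live in $t$ different per-node arrays and the comparator scores are not stored inside the wavelet tree at all. The first issue is handled by viewing the $t$ per-node RMQ structures as feeding a common meta-heap: Frederickson-style selection proceeds uniformly and uses $O(t+k)$ RMQ calls plus $O(t+k)$ score comparisons. The second issue is handled by mapping each local position $j$ at node $v$ back to its original index $i\in[1,n]$ through $O(\log\pi)$ select steps up the tree, and then invoking the score oracle at $i$ in $t_{score}$ time; thus each comparison (and each final reconstruction of the reported entry $A[i]$, which also needs an $O(\log\pi)$ traversal) costs $O(\log\pi+t_{score})$.

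Adding the costs gives $O(\log\pi)$ for the decomposition and $O((\log\pi+k)(\log\pi+t_{score}))$ for the selection and reporting, matching the claimed bound. The main obstacle I expect is the careful adaptation of Lemma~1 from a single array to a collection of per-node RMQs without reintroducing a $\log(t+k)$ factor; I expect this to be a routine generalization of the Frederickson-style selection argument, since all that changes is that the top of the implicit heap now has $t$ children (one per canonical node) rather than one, and each "heap edge" is resolved by an $O(1)$-time RMQ followed by an $O(\log\pi+t_{score})$-time score lookup.
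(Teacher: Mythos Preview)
Your proposal is correct and matches the paper's own proof essentially step for step: the paper also stores $A$ as a wavelet tree with a per-node RMQ over scores, decomposes $[y',y'']$ into $O(\log\pi)$ canonical nodes, projects $[x',x'']$ into each, invokes the multi-range top-$k$ selection of Lemma~1, and pays $O(\log\pi+t_{score})$ per heap access by mapping a local position back to the root via selects before calling the score oracle. The only cosmetic difference is that the paper bounds the range-translation step by $O(\log^2\pi)$ rather than your (also correct) $O(\log\pi)$, which is immaterial to the final bound.
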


\section{A Brief Review of Hon et al.'s  Index}
In this section we give a brief description of Hon et al.'s index~\cite{FOCS09}. 
Let $T=d_1 \# d_2 \# \cdots \# d_D \#$ be a text obtained by concatenating all the documents in $\cal{D}$, separated by a special symbol $\#$ not appearing elsewhere inside any of the $d_i$s.  Then the suffix tree~\cite{Wei73,McC76,sa} of $T$ is called the \emph{generalized suffix tree} GST of $\cal{D}$.  Then any given substring $T[a...b]$ (which does not contain $\#$) of $T$ is a substring of some document $d_x \in \cal{D}$, and the value of $x$ can be computed in $O(1)$ time by maintaining an $(n+D)(1+o(1))$-bit auxiliary data structure\footnote{Maintain a bit vector $\cal{B}$$[1...(n+D)]$, where $\cal{B}$$[i]=1$ if and only if $T[i]=\#$, then $x= rank_{\cal{B}}(a)+1$ and can be computed in $O(1)$ time using~\cite{RRR}.}.
Each edge in GST is labeled by a character string
and for any node $u$, the \emph{path label} of $u$, denoted by $path(u)$
is the string formed by concatenating the edge labels from root to $u$. Note that the path label of the $i$th leftmost leaf in GST
is exactly the $i$th lexicographically smallest suffix of $T$.
For a pattern $P[1..p]$ that appears in $T$, the \emph{locus node} of $P$ is denoted by $locus(P)$, which is the unique node closest to the root
such that $P$ is a prefix of $path(locus(P))$, and can be determined in $O(p)$ time.
We augment the following structures on GST.

\emph{N-structure}: An N-structure entry is a triplet  $(doc, score, parent)$ and is associated with some node in GST. 
If $u$ is a leaf node with $path(u)$ is a suffix of document $d$, the an N-structure entry with $doc=d$ is stored at $u$. 
However, if it is an internal node, multiple N-structure entries may be stored at $u$ as follows: an entry with $doc = d$ is stored if and only if at least two children of $u$ contain (a suffix of) document $d$ in their subtrees. The $score$ field in an N-structure entry for a document $d$ associated with a node $u$ is $score(path(u),d)$: the relevance score of $d$ with respect to the pattern $path(u)$\footnote{The $score$ is dependent only on $d$ and the set of occurrences of $path(u)$ in $d$.}.
The \emph{parent} field stores (the pre-order rank of) the lowest ancestor of $u$
which has an entry for document $d$ in its N-structure. In case there is no such ancestor, we assign a dummy node which is regarded as the parent of the root of GST.

\emph{I-structure}: An I-structure entry is a triplet  $(doc, score, origin)$ and is associated with some node in GST.
If node $u$ has an N-structure entry for document $d$ and an N-structure entry of another node $v$ is given by $(d, score(path(v),d), u)$, then $u$ will have an I-structure entry  $(d, score(path(v),d), v)$.  An internal node may be associated with multiple I-structure entries, and these entries are maintained in an array,  sorted by
the \emph{origin} field. In addition, a range maximum query (RMQ) structure is maintained over the array based on the \emph{score} field.

\subsection{Query Answering}
To answer a top-$k$ query, we first search for the query pattern $P$ in GST and find its locus node $locus(P)$. We also find the rightmost leaf $locus_R(P)$ in the subtree of $locus(P)$.  Now, our task is to find, among the documents whose suffixes appear in the subtree of $locus(P)$, which $k$ of them have the highest
occurrences of $P$.  Hon et al.\ showed that this can be done by checking only the I-structure entries associated with the proper ancestors of $locus(P)$,
and then retrieving those $k$ entries which has the highest  \emph{score} values and whose \emph{origin} is from the subtree of $locus(P)$ (inclusively).
The number of ancestors of $P$ is bounded by $p$ and since the I-structure entries are sorted according to the origin values, the entries to be checked will occupy a contiguous region in the sorted array.  The boundaries of the contiguous region can be obtained by performing a binary search based on
(the pre-order ranks of) $locus(P)$ and $locus_R(P)$. Once we get the boundaries of the contiguous region in each proper ancestors of $locus(P)$,
we can apply RMQ queries repeatedly over \emph{score} and retrieve
the top-$k$ scoring documents in sorted order in $O(p\log n+k\log k)$ time.
The binary search step can be made faster by maintaining a predecessor structure~\cite{yfast} and the resulting time will become $O(p\log\log n +k\log k)$. This time has been further improved to $O(p+k\log k)$ by introducing two additional fields $\delta_f$ and $\delta_{\ell}$ in each N-structure entry. The number of N-structure entries (hence I-structure entries) is $\leq 2n$. Therefore the index space is $O(n\log n)$ bits.

\section{Our Linear-Space Index}

In this section, we derive a modified version of Hon et al.'s linear index without $\delta$ fields and still achieve $O(p)$ term in query time. The main technique is by introducing a novel criterion that categorizes the I-structure entries as \emph{near} and \emph{far}.  The \emph{far} entries associated with certain nodes can be maintained together as a combined I-structure, which reduces the number of I-structure boundaries to be searched to $O(p/\pi+\pi)$, where $\pi$ is a sampling factor. By choosing $\pi=\log\log n$, we shall use predecessor search structure (instead of $\delta$ fields) and can compute the I-structure boundaries in $O((p/\pi+\pi)\log\log n)=O(p+\log^2\log n)$ time. We have the following result.
\BT
There exists an index of size $O(n\log n)$ bits for top-$k$ document retrieval with $O(p+\log^2\log n+ k\log\log\log n+ k\log k)$ query time.
\ET

\begin{proof}
Firstly, we mark all nodes in GST whose node-depths are multiples of $\pi$ (node-depth of root is $0$).  Thus, any unmarked node is at most $\pi$ nodes away
from its lowest marked ancestor. Also, the number of marked ancestors of any node $= \lceil$(number of ancestors)$/\pi \rceil$.
For any node $w$ in GST, we define a value $\zeta(w) <\pi$, where $\zeta(w) =0$ if $w$ is marked, else it is the number of nodes in the path from $w$ (exclusively) till its lowest marked ancestor (inclusively).
In each I-structure entry $(d, s, v)$ associated with a node $w$, we maintain a fourth component $\zeta(w)$.
Next, we categorize the I-structure entries as \emph{far} and \emph{near} as follows:
\begin{quote}
\emph{An I-structure entry associated with a node $w$,  with $origin = v$, is near if there exists no marked node in the path from $v$ (inclusively) to $w$ (exclusively), else it is far.}
\end{quote}
We restructure the entries such that all \emph{far} entries are maintained in a combined I-structure associated with some marked nodes as follows: if $(d, s, v,\zeta(w))$ is a \emph{far} entry in the I-structure $I_w$ associated with node $w$, then we remove this entry from $I_w$ and move to a combined I-structure associated with the node $u$, where $u= w$ if $w$ is  marked, else $u$ is the lowest marked ancestor of $w$ (i.e.,  $u$ is $\zeta(w)$ nodes above $w$). All the entries in the combined I-structure are maintained in the sorted order of \emph{origin} values.  A predecessor search structure over the \emph{origin} field and RMQ structure over the \emph{score} field is maintained over all I-structures.
Next, to understand how to answer a query with our index, we introduce the following auxiliary lemma.
\begin{lemma}
The top-$k$ documents corresponding to a pattern $P$ can be obtained by checking the following I-structure entries $($with origins coming from the subtree of $locus(P))$:\\
(i) near entries in the regular I-structures associated with the  nodes in the path from $locus(P)$ (exclusively) till its lowest marked ancestor $u$ (inclusively), and there are at most $\pi$ such nodes; \\
(ii) far entries with $\zeta < \zeta(locus(P))$ in the combined I-structure of $u$, and\\
(iii) far entries in the combined I-structures associated with the marked proper (at most $p/\pi$) ancestors of $u$.
\end{lemma}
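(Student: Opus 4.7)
The plan is to start from Hon et al.'s invariant: the top-$k$ documents for $P$ are contained among the I-structure entries at the proper ancestors of $locus(P)$ whose $origin$ lies in the subtree of $locus(P)$. After our restructuring, each such entry either stays at its original node $w$ as a near entry, or has been relocated to the combined I-structure of $w$'s lowest marked ancestor (or of $w$ itself if $w$ is marked) as a far entry. I would partition the proper ancestors of $locus(P)$ into the set $A$ of at most $\pi$ ancestors on the path from $locus(P)$ (exclusive) up to $u$ (inclusive), and the set $B$ of ancestors strictly above $u$, and then verify that (i)--(iii) collectively cover every relevant entry.

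The crucial observation is that no near entry at a node $w \in B$ can have its origin $v$ inside the subtree of $locus(P)$. Indeed, any such $v$ is a descendant of $locus(P)$, hence a descendant of $u$, so the path from $v$ (inclusive) to $w$ (exclusive) must traverse the marked node $u$, forcing the entry to be far by definition. Consequently every entry of $B$ that matters is far, and has migrated to the combined I-structure of its own lowest marked ancestor, which is necessarily one of the marked proper ancestors of $u$; this yields the set (iii). For group $A$, the near entries stay at their individual regular I-structures, giving (i). The far entries of $A$ all reside in the single combined I-structure of $u$: far entries at the marked node $u$ itself stay at $u$, while far entries at the unmarked nodes strictly between $locus(P)$ and $u$ are moved to $u$, because $u$ is their lowest marked ancestor.

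It remains to show that the $\zeta < \zeta(locus(P))$ filter in (ii) selects exactly these migrated far entries from inside $u$'s combined I-structure. Writing $\zeta(w) = \text{depth}(w) \bmod \pi$, the nodes in $A$ have depths in $[\text{depth}(u), \text{depth}(locus(P)) - 1]$ and therefore $\zeta$-values in $\{0, 1, \ldots, \zeta(locus(P)) - 1\}$, so the filter is sound. For the converse, any other contributor to $u$'s combined I-structure is a node $w \notin A$ whose lowest marked ancestor (or self) is $u$, hence has depth in $[\text{depth}(u), \text{depth}(u) + \pi)$; if its origin $v$ additionally lies in the subtree of $locus(P)$, then $w$ and $locus(P)$ share $v$ as a common descendant, forcing $w$ to equal $locus(P)$ or to be a descendant of it. In either case $\zeta(w) \geq \zeta(locus(P))$, and the strict filter discards the entry.

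The main obstacle I anticipate is this last filtering argument: one must confirm that the origin constraint forces comparability of $w$ and $locus(P)$, so that the $\zeta$-ordering can cleanly separate strict ancestors of $locus(P)$ from $locus(P)$ itself and its descendants within the same marked block. Once that is resolved, the degenerate case $\zeta(locus(P)) = 0$ (i.e.\ $locus(P)$ is marked, whence $u = locus(P)$, $A = \emptyset$, and only (iii) contributes) falls out automatically, and the lemma follows by a direct rewriting of Hon et al.'s correctness criterion under our relocation rule.
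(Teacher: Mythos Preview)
Your proof is correct and follows essentially the same approach as the paper: both partition the I-structure entries at the proper ancestors of $locus(P)$ into the four groups near/far $\times$ (at or below $u$)/(above $u$), map them to categories (i)--(iii), and use the observation that near entries above $u$ cannot have their origin in the subtree of $locus(P)$. Your treatment of the $\zeta$-filter in (ii) via the comparability argument is in fact more explicit than the paper's one-line justification, and the same comparability reasoning (which you already set up) is what also rules out spurious entries in (iii).
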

\begin{proof}
In the original index by Hon et al.,  we need to check the I-structure entries in all ancestors of $locus(P)$. We may categorize them as follows:
\begin{itemize}
\item[(a)] \emph{near} entries  associated with a node in the subtree of $u$ (inclusively);
\item[(b)] \emph{far} entries associated with a node in the subtree of $u$ (inclusively);
\item[(c)] \emph{far} entries  associated with an ancestor node of $u$;
\item[(d)] \emph{near} entries  associated with an ancestor node of $u$.
\end{itemize}

All entries in (a) belong to category (i) in the lemma. The valid entries in (b) belong to
category (ii), where the inequality $\zeta < \zeta(locus(P))$ ensures that the all entries in category (ii) were originally from an ancestor of $locus(P)$ .
All those entries in (c), which may be a possible candidate for the top-$k$ documents, belong to category (iii) in the lemma. None of the entries in (d) can be a valid output, as the origin of those entries are not coming from the subtree of $u$ (from the definition of  a \emph{near} entry), hence not from the subtree of $locus(P)$.
On the other hand, since we always check for the entries with origins coming from the subtree of $locus(P)$, these entries must be a subset of those checked in the original index by Hon et al.  In conclusion, the entries checked in both indexes are exactly the same, and the lemma follows.  \qed
\end{proof}

Based on the above lemma, we may compute $k$ candidate answers from each category and the actual top-$k$ answers can be computed by comparing the score of these $3k$ documents.
In category (i) we have at most $\pi$ boundaries to be searched, which takes $O(\pi\log\log n)$ time, and then retrieve the $k$ candidate answers in the unsorted order in $O(\pi+k)$ time using lemma 1. Similarly in category (iii), the number of I-structure boundaries to be searched is $p/\pi$ and it takes total $O((p/\pi)\log\log n+k)$ time. However, for category (ii), we have an additional constraint on $\zeta$ value of the entries.  To facilitate the process, the $\zeta$ components are maintained by the data structure in Lemma 2 in $O(n\log\pi)$ bits, so that the desired answers can be reported in $O((\log \pi +k)(\log \pi + O(1)))$ time. The $O(k\log k)$ is for sorting the answers. The time for initial pattern search is $O(p)$. Putting all together with $\pi =\log\log n$, we obtain Theorem 3.  \qed
\end{proof}
\section{Space-Efficient Encoding of Our Index}

In this section, we derive a space-efficient index for  the relevance metric \emph{term-frequency}. The major contribution is that, instead of using $O(\log n)$ bits for an I-structure entry, we design some novel encodings so that each entry requires only $\log D+\log\pi+O(1)$ bits.  The GST will be replaced by a  compressed full text index $CSA$  of size $|CSA|$ bits~\cite{GGV03,fm,GroVit05,csanew} along with the tree encoding of GST in $4n+o(n)$ bits~\cite{tree}\footnote{Any $n$-node ordered tree can be represented in $2n + o(n)$ bits, such that if each node is labeled by its pre-order rank in the tree,
any of the following operations can be supported in constant time~\cite{tree}:
\emph{parent$(i)$}, which returns the parent of node $i$;  \emph{child$(i,q)$}, which returns the  $q$-th child of node $i$;
\emph{child-rank$(i)$}, which returns the number of siblings to the left of node $i$;
\emph{lca$(i,j)$}, which returns the lowest common ancestor of two nodes $i$ and $j$; and
\emph{lmost-leaf$(i)$/rmost-leaf$(i)$}, which returns the leftmost/rightmost leaf of node $i$.}.
Thus $locus(P)$ can be computed in $O(p)$ time by taking the LCA (lowest common ancestor) of leftmost and rightmost leaf in the suffix range of $P$.

A core component of our index is the document array $D_A$, where
$D_A[i]$ stores the id of document to which the $i$th smallest suffix in GST belongs to.
The $D_A$ can be maintained in $n\log D + O(\frac{n\log D}{\log\log D})$ bits and can answer the following queries in $O(\log\log D)$ time~\cite{Goly}.
(i) \emph{access$(i)$}:    returns $D_A[i]$;
(ii) \emph{rank$(d, i)$}:  returns the number of occurrences of document $d$  in $D_A [1... i]$;
(iii) \emph{select$(d, j)$}: is $-1$ if $j >|d|$, else $i$ where $D_A[i]=d$ and $rank(d,i)=j$.
Now we show how to use $D_A$ for efficient encoding and decoding of different components in an I-structure entry.

\emph{Term-frequency Encoding}: Given an I-structure entry with $origin = v$ and $doc = d$, the corresponding \emph{term-frequency} score is  exactly the number of occurrences of $d$ in $D_A[i...j]$, where $i$ and $j$ are the leftmost leaf and the rightmost leaf of $v$, respectively.  Thus, given the values $v$ and $d$, we can find $i$ and $j$ in constant time based on the tree encodings of the GST, and then compute \emph{term-frequency} in $O(\log\log D)$ time based on two rank queries on $D_A$.  Thus, we will discard the \emph{score} field completely for all I-structure entries, but keeping only the RMQ structure over it.

\emph{Origin Encoding}: Origin encoding is the most trickiest part, and is based on the following observation by Hon et. al~\cite{FOCS09}: for any document $d$ and for any node $v$ in GST, there is at most one ancestor of $v$ that contains an I-structure entry with $doc = d$ and $origin$ from a node in the subtree of $v$ (inclusively). We introduce two separate schemes for encoding \emph{origin} fields in \emph{near} and \emph{far} entries. This reduces the \emph{origin} array space from $O(n\log n)$ bits to $O(n)$ bits and decoding takes $O(\log\log D)$ time.

\emph{Encoding near entries:} Let $I_w$ be a regular I-structure (with only \emph{near} entries) associated with a node $w$ and let $w_q$ represents the pre-order rank of $q^{th}$ child of $w$.
 Then from the definition of I-structures, for a given document $d$, there exists at most one entry in $I_w$ with $doc =d$ and origin from the sub-tree of $w_q$ (inclusively).
  Thus, for a given document $d$ and an internal node $w$, an entry
 in $I_w$ can be associated to a unique child node $w_q$ of $w$ (where $w_q$ represent the $q$th child of $w$ from left, $1\leq q \leq degree(w)$, and pre-order rank of $w_q$ can be computed in constant time~\cite{tree}), such that \emph{origin} is in the subtree of $w_q$.
 Moreover, this \emph{origin} must be the node, closest to root, in the subtree of $w_q$ which has an N-structure entry for $d$. From the definition of N-structure, this \emph{origin} node must be the lowest common ancestor (LCA) of  the leaves corresponding to the first and last suffixes of $d$ in the subtree of $w_q$, which can be computed using the tree encoding of GST and a constant number of rank/select operations on  $D_A$ in total $O(\log\log D)$ time. Therefore, by maintaining the information about $w_q$ (\emph{origin-child} $= q$) for each I-structure entry, the corresponding \emph{origin} value can be decoded in $O(\log\log D)$ time.
 Thus, the \emph{origin} array can be replaced completely by the \emph{origin-child} array.  Recall that each node maintains the I-structure entries in sorted
order of the origins, so that the corresponding \emph{origin-child} array will be monotonic increasing.  In addition, the value of each entry is between $1$ and $degree(w)$, so that the array can be encoded using a bit vector of length $|I_w|+degree(w)$\footnote{A monotonic increasing sequence $S= 1333445$  can be encoded as $B=101100010010$ in $|B|(1+o(1))$ bits, where  $S[i] =rank_1(select_0(i))$ on $B$, and can be computed in constant time~\cite{RRR}.}.
The total size of the bit vectors associated with all nodes can be bounded by $\sum_{w \in GST}(|I_w|+degree(w))= O(n)$ bits.
The $O(n\log n)$ bits
predecessor search structure over \emph{origin} array is replaced by a structure of $o(n)$ bits space and $O(\log\log n)$ search time\footnote{Construct a new array by sampling every $\log^2 n$th element in the original array, and maintain predecessor search structure over it.
Now, when we perform the query, we can first query on this sampled structure to get an approximate answer, and the exact answer can be obtained by performing binary search on a smaller range of only $\log^2 n$ elements in the original array. The search time still remains $O(\log\log n)$.}.

 \emph{Encoding far entries:} In order to encode the \emph{origin} values in \emph{far} entries, we introduce the following notions. Let $w^*$ be a marked node, then another node $w_q^*$ is called its $q$th marked child, if $w_q^*$ is the $q$th smallest (in terms of pre-order rank) marked node with $w^*$ as its lowest marked ancestor. Given the pre-order rank of $w^*$, the pre-order rank of $w^*_q$ can be computed in constant time by maintaining an additional $O(n)$ bits structure.\footnote{Let GST$^*$ be a tree induced by the marked nodes in  GST, so that $w^*$ is the lowest marked ancestor of $w_q^*$ in GST if and only if the  node corresponding to $w^*$  in GST$^*$ (say, $w$) is the parent of node corresponding to $w_q^*$ (say $w_q$)  in GST$^*$.
Moreover,  $w_q^*$ is said to be the $q$th marked child of node $w^*$ in GST, if $w_q$ is the $q$th child of $q$ in GST$^*$.
Given the pre-order rank of any marked node in GST, its pre-order rank in GST$^*$ (and vice versa)
can be computed in constant time by maintaining an additional bit vectors of size $2n+o(n)$ which maintain the information if a node is marked or not.}
Let $I_{w^*}$ represents the combined I-structure (with only \emph{far} entries) associated with a marked node $w^*$. The origin value of any far entry in $I_{w^*}$ is always a node in the subtree of some marked child $w_q^*$ of $w^*$, and is always unique for a given $q$ and $doc=d$. Thus by maintaining the information about $w_q^*$ (\emph{origin-child$^*=q$}), we can decode the corresponding \emph{origin} value for a particular document $d$. i.e. \emph{origin} is the LCA of the leaves corresponding to the first and last suffix of $d$ in the sub-tree of $w_q^*$,
which can be computed using the tree encoding of GST and a constant number of rank/select operations on  $D_A$ in total $O(\log\log D)$ time. Now \emph{origin} array can be replaced by \emph{origin-child$^*$} array, which can be encoded in $\sum_{w^* \in GST^*}(|I_{w^*}|+degree(w^*))= O(n)$ bits (using the similar scheme for encoding \emph{origin-child} array for \emph{near} entries). The predecessor search structure is replaced by $o(n)$ bits sampled predecessor search structure.

\textbf{\emph{Query Answering}:} Query answering algorithm remains the same as that in our linear index, except the fact that decoding \emph{origin} and \emph{term-frequency} takes $O(\log\log D)$ time. Then the time complexities for the steps in Lemma 3 are as follows: Step (i) $O((\pi\log\log n+k)\log\log D)$, Step (ii) $O((\log \pi+k)(\log \pi+\log\log D))$ and Step (iii) $(((p/\pi)\log\log n+k)\log\log D)$. Since the \emph{term-frequencies} are positive integers $\leq n$, we shall use a y-fast trie~\cite{yfast} to get the sorted answer in $O(k\log\log n)$ time.
By choosing $\pi =\log^2\log n$, the query time can be bounded by $O(t_s(p)+p+\log^4\log n+k\log\log n)$, which gives the query time in Theorem 1. Here $t_s(p)$ is the time for initial pattern searching in $CSA$, and is $\Omega(p)$ for space-optimal CSA's~\cite{fm,csanew}.

\textbf{\emph{Space Analysis}:} The index consists of a full text index of $|CSA|$ bits, $D_A$ of $n\log D(1+o(1))$ bits, I-structures of total $2n(\log D+O(\log \pi)+O(1))$ bits,
tree encodings, RMQ structures and sampled predecessor search structures (together $O(n)$ bits). By choosing $\pi = \log^2\log n$, the index space can be bounded by $|CSA|+n\log D(3+o(1))+O(n\log\log\log n)$ bits. In order to obtain the space bounds in Theorem 1, we may categorize $D$ into the following two cases.

\begin{enumerate}
\item When $\log D/\log\log D > \log\log\log n$, the $O(n\log\log\log n)$ term can be absorbed in $o(n\log D)$. The space can be further reduced by $n\log D$ bits from the following observation that the \emph{term-frequency} is $1$ for those I-structure entries with \emph{origin = a leaf in GST}, and there are $n$ such entries. Therefore all such entries can be deleted and in case if such a document is within top-$k$, that can be reported using document listing. For that we shall use Muthukrishnan's chain array idea~\cite{muthu}. The chain array $C[1...n]$ is defined as follows: $C[i]= j$, where $j<i$ is the largest number with $D_A[i] = D_A[j]$ and can be simulated using $D_A$ as $j = select(D_A[i], rank(D_A[i], i)-1)$ in $O(\log\log D)$ time. Thus we do not maintain chain array, instead an $2n+o(n) =o(n\log D)$ bits RMQ structure~\cite{rmq} over it. 
Let $[L, R]$ be the suffix range of $P$ in the full text index, then document listing can be performed (in $O(\log\log D)$ time per document) by reporting all those documents $D_A[i]$ such that $L \leq i \leq R$ and $C[i] < L $ using repeated RMQ's. 
Although those documents with \emph{frequency} $ >1$ will get retrieved again (but only once),  it will not affect the overall time complexity. 

\item When $\log D/\log\log D \leq \log\log\log n$,
we shall use the index described in Theorem 4. Thus the space-query bounds will be $|CSA|+n\log D(1+o(1))$ bits and  $O(t_s(p)+\log\log n+k\log D \log^2\log D)=O(t_s(p)+k\log\log n)$ respectively.

\end{enumerate} 
By combining the above case, we get the result in Theorem 1. \qed

\BT
There exists an index of size $|CSA|+n\log D(1+o(1))$ bits with a query time of $O(t_s(p)+\log\log n+k\log D\log^2\log D)$ for retrieving top-$k$ documents with the highest term frequencies for a query pattern P of length p. 
\ET
\emph{Proof.} See Appendix C. 

\section{Saving More Space}
The most space-efficient version of our index (described in theorem 2) is proved in this section.
First, we give the following auxiliary lemma (see Appendix D for proof).  

\begin{lemma}
There exists an $O(n\log\sigma\log\log n)$ bits structure, which can answer access/rank/select queries on $D_A$ in $O(\log^2\log n)$ time, and can compute an entry $C[i]$ in the chain-array data structure (for document listing) in $O(\log\log n)$ time. 

\end{lemma}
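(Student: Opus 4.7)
The plan is to avoid the $n\log D$-bit cost of storing $D_A$ as a stand-alone wavelet tree by simulating its operations through the CSA and a wavelet tree of the BWT of $T$. The key observation is that $D_A[i]$ is completely determined by the text position $SA[i]$ together with the bit vector $\mathcal{B}$ marking the $\#$ separators: once $SA[i]$ is known, a single rank on $\mathcal{B}$ converts it to the document identifier in $O(1)$ time. Thus after building a wavelet tree on the BWT of $T$, which occupies $O(n\log\sigma)$ bits and supports $LF$ and $\Psi$ in $O(\log\log n)$ time, it suffices to support $SA$, $SA^{-1}$, and suitable document-count analogues of rank/select using a further $O(n\log\sigma\log\log n)$ bits.

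For access, I would sample $SA$ at every $s$-th position with $s=\Theta(\log\log n)$, storing $O((n/s)\log n)=O(n\log n/\log\log n)$ bits of samples, and recover an arbitrary $SA[i]$ by walking at most $s$ steps of $LF$ until reaching a sampled position; each step costs $O(\log\log n)$ through the BWT wavelet tree, giving $O(\log^2\log n)$ time per access, and then $D_A[i]$ follows in $O(1)$ via $\mathcal{B}$. For rank and select on $D_A$, my plan is to layer a sparse cumulative-count table over the sampled positions that records, for every sampled SA-index, a compressed vector of per-document counts up to that index, together with a wavelet-tree-style structure of size $O(n\log\sigma\log\log n)$ bits that resolves the residual offset inside a block; each query then combines an $O(\log\log n)$-time lookup on the coarse level with an $O(\log^2\log n)$-time SA walk for the fine level.

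The main obstacle is driving the chain-array query $C[i]$ below the $O(\log^2\log n)$ bound that a naive two-step composition $C[i]=\select(D_A[i],\rank(D_A[i],i)-1)$ would yield. Here I would exploit the structural identity that $C[i]$ is the SA-rank of the largest text position strictly less than $SA[i]$ that lies in the same document $d=D_A[i]$, and this previous in-document position is the one obtained by following $\Psi$ from the suffix starting at the preceding occurrence of $d$ in the text; equivalently, $C[i]=SA^{-1}(\,\text{prev}_d(SA[i])\,)$ for an appropriate predecessor operation on positions within document $d$. By precomputing this ``previous same-document'' pointer only at the sampled SA positions and storing the pointer values relative to the nearest sample (hence in $O(\log s)$ bits per sample), the total overhead is $o(n\log\sigma\log\log n)$ bits, while a single $O(\log\log n)$-time BWT-wavelet-tree step plus a constant number of $\mathcal{B}$-rank and tree operations suffice to produce $C[i]$. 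The delicate point will be ensuring that when $i$ falls between samples the chain pointer can still be resolved with only $O(1)$ additional $LF$ steps, rather than the $\Theta(s)$ walk needed for full $SA$ recovery, which is what ultimately distinguishes the $O(\log\log n)$ bound for $C[i]$ from the $O(\log^2\log n)$ bound for plain access.
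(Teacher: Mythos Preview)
Your proposal has two substantive gaps.

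\textbf{The chain array is about lexicographic order, not text order.} By definition $C[i]$ is the largest $j<i$ with $D_A[j]=D_A[i]$; that is, the global suffix-array rank of the \emph{lexicographically} previous suffix belonging to the same document. Your ``structural identity'' $C[i]=SA^{-1}(\text{prev}_d(SA[i]))$, with $\text{prev}_d$ taken over \emph{text positions}, computes something different: the SA-rank of the suffix starting one position earlier in $d$'s text. A small example shows the mismatch (three suffixes of $d$ at text positions $5,10,15$ landing at SA positions $20,7,13$: your formula gives $C[20]$ undefined, whereas the correct value is $13$). Consequently the whole mechanism you build around sampled ``previous same-document'' pointers and $O(1)$ extra $LF$ steps is targeting the wrong quantity, and there is no reason the true $C[i]$ should be recoverable without a full walk in your scheme.

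\textbf{The space budget is not met.} Sampling $SA$ every $s=\Theta(\log\log n)$ positions costs $\Theta((n/s)\log n)=\Theta(n\log n/\log\log n)$ bits, which is not $O(n\log\sigma\log\log n)$ when $\sigma$ is small (for constant $\sigma$ you would need $\log n=O(\log^2\log n)$). Your rank/select plan is also only a sketch: a ``per-document cumulative-count table at every sampled SA index'' is $D$ numbers per sample and blows up immediately, and you do not specify what the ``wavelet-tree-style structure of size $O(n\log\sigma\log\log n)$'' actually is or why it answers $\rank_{D_A}$ in the claimed time.

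The paper takes a different and much cleaner route. It keeps a global compressed suffix array for $T$ \emph{and} an individual compressed suffix array $CSA_d$ for each document, all instantiated with the Grossi--Vitter tradeoff that gives $t_{sa},t_{\overline{sa}}=O(\log\log n)$ within $O(n\log\sigma\log\log n)$ total bits. Then $\textit{access}(i)$ is one global $SA$ lookup; $\select(d,j)$ is one $SA_d$ lookup followed by one global $SA^{-1}$; and $\rank(d,i)$ is obtained by a predecessor-accelerated binary search over $\select$, giving the $O(\log^2\log n)$ bound. The key payoff is for $C[i]$: compute $SA[i]$, map to the local rank $i_d$ via $SA_d^{-1}$, take $SA_d[i_d-1]$, and map back with the global $SA^{-1}$. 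This is a constant number of $O(\log\log n)$ operations, which is exactly why $C[i]$ comes out faster than $\rank$---no delicate between-sample argument is needed.
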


To achieve space reduction, we categorize $D$ into the following cases:

\begin{enumerate}
\item  $\log D < (\log \sigma \log\log n)^{1+\epsilon/2} $: We shall use the index described in Theorem 4 and the query time will be $O(t_s(p)+k(\log \sigma \log\log n)^{1+\epsilon})$.

\item $\log D \geq (\log \sigma \log\log n)^{1+\epsilon/2}$: In this case $D_A$ is replaced by a structure described in Lemma 4, which makes the index space $n\log D(1+o(1))$ bits. Then by re-deriving the bounds with $\pi=\log^3\log n$, our query time will be $O(t_s(p)+\log^6\log n +k\log^2\log n)$.

The $O(k\log^2\log n)$ term can be further improved to $O(k\log\log n)$ from the following observation that, once we get the I-structure boundaries, we do not need any information about the $origin$ fields for further query processing. Thus the only value needed is the \emph{term-frequency}, which can be computed as follows:  a sampled document array $D_A^s$ is maintained, such that $D_A[i] = d$ is stored if and only if $(rank_{D_A}(d,i)) mod~\rho = 0$, for an integer $\rho = \Theta(\log D)$, else we store a NIL value, where $rank_{D_A}(d, j)$ is the number of occurrences of $d$ in $D_A[1...j]$.
Then $D_A^s$ can be maintained in $O(n\log D/\alpha) =O(n)$ bits and can compute an approximate rank.
That is $\rho~rank_{D_A^2}(d, j) \leq  rank_{D_A}(d, j) \leq \rho~rank_{D_A^2}(d, j)+\rho$. 
Thus associated with each I-structure entry, we shall store this error ($=\Theta(\log D)$), which is equal to actual \emph{term-frequency} minus approximate \emph{term-frequency} (computed using $D_A^s$). Thus by storing this error corresponding to each I-structure entry in total $O(n\log \rho) =O(n\log\log D) = o(n\log D)$ bits space, the \emph{term-frequency} can be obtained in $O(\log\log D)= O(\log\log n)$ time by first computing the approximate \emph{term-frequency} using $D_A^s$ and then by adding this stored value. Note that for the initial I-structure boundary searches, the origin decoding is performed using the structure in Lemma~4. Moreover, this structure can compute chain array values in $O(\log\log n)$ time, which can be used for document listing in $O(\log\log n)$ time per report (when the I-structure entries with \emph{term-frequency} $=1$ are deleted from the index, and later such a document is an answer for a query).

\end{enumerate}

By combining the above cases, we obtain an $|CSA|+n\log D(1+o(1))$ bits index with query time $O(t_s(p)+ k(\log \sigma \log\log n)^{1+\epsilon}+ \log^6\log n)$, which completes the proof of Theorem 2. \qed

\appendix

\section{Proof of Lemma 1}
In~\cite{FOCS09}, Hon et al.\ described an $O(t+k\log k)$-time algorithm for retrieving the $k$ largest numbers in the sorted order.
However, if sorted order is not necessary, the time can be improved to $O(p+k)$ based on the following result of Frederickson~\cite{Fred}:
The $k$th largest number from a set of numbers maintained in a binary
max heap $\Delta$ can be retrieved in $O(k)$ time by visiting $O(k)$ nodes in $\Delta$.
In order to solve our problem, we may consider a conceptual binary max
heap $\Delta$ as follows: Let $\Delta'$ denote the balanced binary subtree with $t$ leaves
that is located at the top part of $\Delta$ (with the same root).
Each of the $t-1$ internal nodes in $\Delta'$ holds the value $\infty$.
The $i$th leaf node $\ell_i$ in $\Delta'$ (for $i=1,2,...t$)
holds the value $A[M_i]$, which is the maximum element in the interval
$A[L_i..R_i]$. The values held by the nodes below $\ell_i$ will be defined recursively as follows:
For a node $\ell$ storing the maximum element $A[M]$ from the range $A[L..R]$,
its left child stores the maximum element in $A[L..(M -1)]$ and its right child
stores the maximum element in $A[(M+1)..R]$. Note that this is a conceptual heap which is built on the fly,
where the value associated with a node is computed in constant time based on the RMQ structures
only when needed. Therefore, we first find the $(t-1+k)$th largest element $X$ in this heap by visiting $O(t+k)$ nodes
(with $O(t+k)$ RMQ queries) using Frederickson's algorithm. Then, we obtain all those numbers
in $\Delta$ which are $\geq X$ in $O(t+k)$ time by a pre-order traversal of $\Delta$, such that if the value
associated with a node is $< X$, we do not check the nodes in its subtree. From those retrieved numbers,
we delete all the $\infty$s and then separate out the $k$ largest elements in $O(t+k)$ time.

\section{Proof of Lemma 2}
In order to answer the above query, we maintain $A$ in the form of a
\emph{wavelet tree}~\cite{GGV03}, which is an ordered
balanced binary tree of $n$ leaves, where each leaf is labeled with a
symbol in $\Pi$, and the leaves are sorted alphabetically from left to right.
Each internal node~$w_q$ represents an alphabet set~$\Pi_q$, and is
associated with a bit-vector $B_q$.
In particular, the alphabet set of the root is $\Pi$, and the alphabet
set of a leaf is the singleton set containing its corresponding symbol.
Each node partitions its alphabet set among the two children (almost) equally,
such that all symbols represented by the left child are lexicographically
(or numerically) smaller than those represented by the right child.

For a node $w_q$, let $A_q$ be a subsequence of $A$ by retaining only those symbols
that are in $\Pi_q$. Then $B_q$ is a bit-vector of length $|A_q|$,
such that $B_q[i]=0$ if $A_q[i]$ is a symbol represented by the left child of $w_q$, else $B_q[i]=1$.
Indeed, the subtree from $w_q$ itself forms a wavelet tree of $A_q$. To reduce the space requirement,
the array $A$ is not stored explicitly in the wavelet tree. Instead, we only store the bit-vectors $B_q$,
each of which is augmented with Raman et al.'s scheme~\cite{RRR} to support
constant-time rank/select operations. The total size of the bit-vectors and the augmented structures
in a particular level of the wavelet tree is $n(1+o(1))$ bits. We maintain an additional \emph{range
maximum query} (RMQ)~\cite{rmq} structure over the $score$ of all
elements of the sequence $A_q$ (in $O(|A_q|)$ bits). As there are $\log
\pi $ levels in the wavelet tree, the total space is $O(n\log \pi)$ bits.
Note that the value of any $A_q[i]$ for any given $w_q$ and $i$ can
be computed in $O(\log \pi)$ time by traversing $\log \pi$ levels in the
wavelet tree. Similarly given any range $[x'...x'']$ can be translated to
$w_q$ as $[x'_q..x''_q]$ in $O(\log \pi)$ time, where $A[x'_q..x''_q]$ is a subsequence
of $A[x'...x'']$ with only those elements in $\Pi_q$.

The desired $k$ highest scoring entries can be answered as follows:  Firstly the
given range $[y', y'']$ can be split into at most $2\log \pi$ disjoint
subranges, such that each subrange is represented by $\Pi_q$ associated
with some internal node $w_q$. All the numbers in the subsequence $A_q$
associated with such an internal node $w_q$ will satisfy
the condition $y' \leq A_q[i] \leq y''$. And for all such (at most $2\log
\pi$) $A_q$s, the range $[x',x'']$ can be translated into the corresponding range
$[x_q', x_q'']$ in $O(\log^2 \pi)$ time.
Now, we can apply Lemma 1 (where $t \leq 2\log \pi$) to solve the desired query.
However, retrieving a node value in the conceptual max heap (in the proof of Lemma 1)
requires us to compute the score of $A_q[i]$ for some $w_q$ and $i$ on the fly,
we shall do so by first finding the entry $A[i']$ that corresponds to $A_q[i]$, and then
retrieving the score of $A[i']$.  This takes $O(\log \pi + t_{score})$ time,
so that the total query time will be bounded
by $O(\log^2\pi+ (2\log \pi + k)(\log \pi + t_{score}))= O((\log \pi + k)(\log \pi+t_{score}))$.

\section{Proof of Theorem 4}
 A simple index can be derived based on the  succinct framework proposed by Hon et al.~\cite{FOCS09} and Gagie et al~\cite{gagie}, which consists of the compressed version of GST ($CSA$ and tree encoding) and the document array $D_A$  (of $n\log D +O(\frac{n\log D}{\log\log D})$ bits space with $rank/select/access$ capabilities in $O(\log\log D)$ time for any $d \in \D$~\cite{Goly}). Also, for a particular value $q$ to be defined
we group every $g= q\log D\log\log D$ leaves in the GST together (from left to right) and mark the lowest common ancestor (LCA) of all these leaves.  
Further, we mark the LCA of all pairs of marked nodes. Thus the number of marked nodes in GST can be bounded by $O(n/g)$~\cite{FOCS09}. For each marked node, we maintain the top-$q$ documents in its subtree explicitly, which takes $O(n/g \times q \log D)= O(n/\log\log D)$ bits.
We perform this marking and store the top-$q$ answers for $q=1,2,4,...$, which takes $O(\frac{n\log D}{\log\log D})$ bits of storage space. 
The total index space can thus be bounded by $|CSA|+ O(n)+ n\log D+O(\frac{n\log D}{\log\log D})= |CSA|+ n\log D(1+o(1))$ bits (assuming $D > \sqrt{\log\log n}$)

In order to retrieve top-$k$ answers corresponding to a suffix range, we first search for $P$ in GST and obtain its locus node $locus(P)$ in $O(p)$ time. 
Further we round the value of $k$ to the next highest power of $2$, say $q$. Now we search for a marked node $locus^*(P)$ (corresponding to this $q$), 
which is same as $locus(P)$ if $locus(P)$ is marked, else it is the highest marked descendent of $locus(P)$. Let $[L,R]$ be the suffix range of $locus(P)$ 
and $[L^*, R^*]$ be the suffix range of $locus^*(P)$, then the leaves corresponding to the ranges $[L,L^*-1]$ and $[R^*+1, R]$ are called \emph{fringe leaves}. 
It is easy to show that the number of fringe leaves is at most $2g$ (see~\cite{FOCS09}). Hence, in order to retrieve the top-$k$ answers, 
we first check the top-$q$ answers stored at $locus^*(P)$ (and compute their scores), and then retrieve the score of each of the $2g$ documents 
corresponding to the fringe leaves.  Recall that the score of a document $d$ is the frequency of $P$ in $d$, which can be computed in $O(\log\log D)$ time.
Thus, the total time can be bounded by $O(t_s(p) + (g+k)\log\log D)= O(t_s(p) + k\log D\log^2\log D)$.

Next, we find the top-$k$ answers from this candidate set of $2g+q <2g+2k$ documents. As there may be repetitions in the set, we first remove
the repetitions by scanning the set once (using an auxiliary bit vector of length $D$ to mark if we have already seen a document). 
After that, we find the document $d$ which has the $k$th highest frequency using $O(k+g)=O(k\log D\log\log D) $ time~\cite{median}. 
Finally, we isolate the top-$k$ answers in unsorted order based on the score of $d$, and sort them in $O(k\log k)=O(k\log D)$ time. 
If $D \leq \sqrt{\log\log n}$, we can retrieve the \emph{term-frequency} of all documents in $\D$ and trivially find the top-$k$ documents in $O(D\log D) = O(\log \log n)$ time. Putting all together, the over all query time can be bounded by $O(t_s(p)+\log\log n+k\log D\log^2\log D)$.

\section{Proof of Lemma 4}
Let $CSA$ be the compressed suffix array corresponding to the suffix array associated with $GST$.
Let $t_{sa}$ and $t_{\overline{sa}}$ denote the time for computing $SA[i]$ (starting position of $i$th smallest suffix of $T$) and the time 
for computing $SA^{-1}[j]$ (the rank of the $j$th suffix $T[j...n]$ among all suffixes of $T$), respectively. Hon et al.~\cite{FOCS09} showed that the above operations on $D_A$ can be simulated by an index of size $2|CSA|+o(n)$, and the best query time complexities are due to Belazzougui and Navarro in~\cite{Belazz}. 
We conclude the results in the following lemma. 
We maintain $CSA$ corresponding to $GST$ and the compressed suffix arrays $CSA_d$ (for $d=1, 2, 3,\ldots, D$)  corresponding to each individual document. Now, \emph{access$(i)$} can be obtained by returning $SA[i]$ in $CSA$. For \emph{select$(d, j)$}, we first compute the $j$th smallest suffix in $CSA_d$, and obtain the position $pos$ of this suffix within document $d$, based on which we can easily obtain the position $pos'$ of this suffix within the concatenated text of all documents. After that, we compute $SA^{-1}[pos']$ in $CSA$ as the desired answer for  \emph{select$(d, j)$}. 
By doing a binary search on $select$, \emph{rank$(d, i)$} can be obtained in  $O((t_{sa}+ t_{\overline{sa}})\log n)$ time. This time can be improved to $O((t_{sa}+ t_{\overline{sa}})\log\log n)$ as follows: At every $\log^2n$th leaf of each $CSA_d$, we explicitly maintain its corresponding position in $CSA$ and maintain a predecessor structure over it~\cite{yfast}. The size of this additional structure is $o(n)$ bits. Now, when we perform the query, we can first query on this predecessor structure to get an approximate answer, and the exact answer can be obtained by performing binary search on a smaller range of only $\log^2 n$ leaves. By choosing the $O(n\log\sigma\log\log n)$-bits space CSA by Grossi and Vitter~\cite{GroVit05}, where $t_{sa}$ and $t_{\overline{sa}}$ takes $O(\log\log n)$ time, we obtain the lemma.

An entry in chain array $C[i]= j$, if $j<i$ is the largest number with $D_A[i] = D_A[j]= (say~d)$ and is NIL if there is no such $j$. We shall use the following steps to compute $j$: using $SA[i]$ compute the starting position of lexicographically $i$th smallest suffix of the concatenated text and the corresponding $d$ value. Let this be the lexicographically $i_d$th smallest suffix  of $d$, then $(i_d-1)$th smallest suffix  of $d$ can be computed using an $SA_d$ and $SA_d^{-1}$ operations. Further we map this text position in $d$ back to the concatenated text and perform an $SA^{-1}$ operation on it to obtain $j$. The total time can be bounded by $O(\log\log n)$.

\end{document}